\DeclareMathAlphabet{\pazocal}{OMS}{zplm}{m}{n}
\DeclareMathOperator{\epi}{\mathbf{epi}}
\theoremstyle{plain}
\newtheorem{thm}{\textbf{Theorem}}
\newtheorem{lem}{\textbf{Lemma}}
\newtheorem{prop}{\textbf{Proposition}}
\newtheorem{cor}{\textbf{Corollary}}
\theoremstyle{definition}
\newtheorem{defn}{\textbf{Definition}}
\newtheorem{exmp}{\textbf{Example}}
\newtheorem{ass}{\textbf{Assumption}}
\theoremstyle{remark}
\newtheorem{rem}{\textbf{Remark}}
\begin{document}

\title{\LARGE \bf
Compositional Set Invariance in Network Systems with Assume-Guarantee Contracts
}
\author{Yuxiao Chen, James Anderson, Karan Kalsi, Steven H. Low, and Aaron D. Ames
\thanks{Yuxiao Chen and Aaron D. Ames are with the Department of Mechanical and Civil Engineering, Caltech,
        Pasadena, CA, 91106, USA. Emails:
        {\tt\small \{chenyx, ames\}@caltech.edu}}
\thanks{James Anderson and Steven H. Low are with the Computing and Mathematical Sciences Department, Caltech,
         Pasadena, CA, 91106, USA. Emails:
        {\tt\small \{james, slow\}@caltech.edu}}
\thanks{Karan Kalsi is with Pacific Northwest National Laboratory, Richland, WA, 99352, USA. Email:
        {\tt\small Karanjit.Kalsi@pnnl.gov}}
\thanks{This work is supported by the Battelle Memorial Institute, Pacific Northwest Division, Grant \#424858.}
}


\maketitle
\begin{abstract}
  This paper presents an assume-guarantee reasoning approach to the computation of robust invariant sets for network systems. Parameterized signal temporal logic (pSTL) is used to formally describe the behaviors of the subsystems, which we use as the template for the contract. We show that set invariance can be proved with a valid assume-guarantee contract by reasoning about individual subsystems. If a valid assume-guarantee contract with monotonic pSTL template is known, it can be further refined by value iteration. When such a contract is not known, an epigraph method is proposed to solve for a contract that is valid, ---an approach that has linear complexity for a sparse network. A microgrid example is used to demonstrate the proposed method. The simulation result shows that together with control barrier functions, the states of all the subsystems can be bounded inside the individual robust invariant sets.
\end{abstract}
\section{Introduction}\label{sec:intro}
Correct-by-construction control synthesis has seen recent success in safety-critical applications such as vehicle control \cite{nilsson2014preliminary,chen2018validating} and robot navigation \cite{chen2018obstacle}. This approach bases the controller on concepts such as reachable set and control invariant sets to synthesize a controller that is capable of enforcing safety. However, reachability analysis and invariant set computation rely on computational tools such as Hamilton Jacobi \cite{mitchell2005time}, Linear Matrix Inequality (LMI) \cite{khlebnikov2011optimization} and sum of squares (SOS) programming \cite{papachristodoulou2005tutorial,prajna2004nonlinear} ---these methods scale poorly with the dimension of the system. Because of this limitation, sometimes referred to as ``the curse of dimensionality,'' the applications of the correct-by-construction control synthesis have been limited to systems with low state dimension. There has been effort to break ``the curse of dimensionality,'' which typically utilizes either the compositional analysis or system symmetry \cite{hussien2017abstracting,smith2016interdependence,nilsson2016control,anderson2011model}. For example, in \cite{smith2016interdependence}, the weakly coupled longitudinal and lateral dynamics of the vehicle are treated independently by finding a bound on the coupling effect. In \cite{nilsson2016control}, when a large network system consists of small subsystems that are identical, the symmetry is utilized to compute invariant sets for a large number of subsystems. However, correct-by-construction synthesis for network systems with heterogeneous subsystems and strong coupling between them remains an open problem. One example is the power grid, which consists of various types of generation buses and load buses, as shown in Fig. \ref{fig:grid_network}.
\begin{figure}
  \centering
  \includegraphics[width=3in]{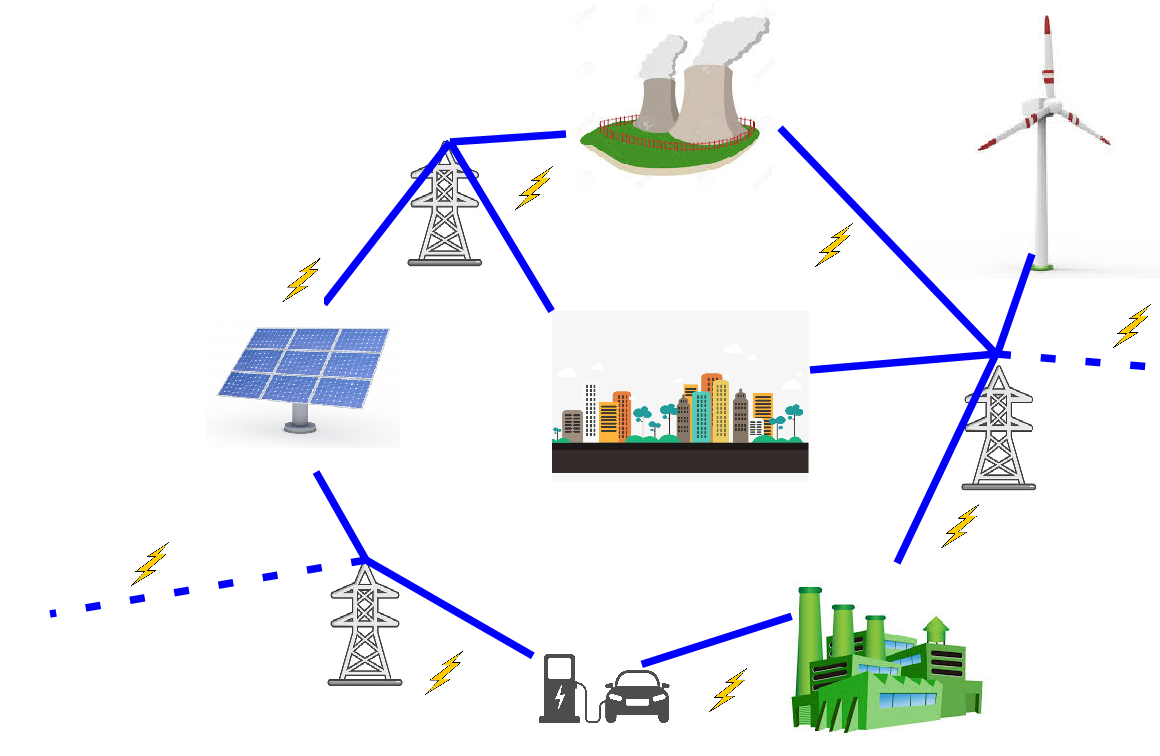}
  \caption{Power grid with generator buses and load buses}\label{fig:grid_network}
\end{figure}

One promising direction is the assume-guarantee contract \cite{alur1999reactive}, which decomposes the overall performance guarantee into individual contracts for each subsystem. Every subsystem in the network takes the performance guarantee from other subsystems as assumptions and in turn gives its own performance guarantee, which then becomes part of the assumptions for other subsystems in the network. For discrete transition systems, there exists algorithms that automatically generate assume-guarantee contracts \cite{bobaru2008automated}. However, for dynamic systems with continuous state space, to the knowledge of the authors, there exists no method that generates a valid assume-guarantee contract automatically and the design of such a contract depends on engineering intuition and trial-and-error. This is the problem that will be studied in this paper.

In this paper, we propose the epigraph algorithm that searches for valid assume-guarantee contracts for a network system via optimization ---this leads to robust invariant sets for the network system. The proposed method is compatible with any existing method for invariant set computation and enjoys linear complexity for sparse networks. The epigraph method consists of three steps. First, for each subsystem, we associate it with a parameterized assume-guarantee contract and define a local function, $\lambda_i$, that characterizes the relationship between the assumption parameters and the guarantee parameters. Then a grid sampling algorithm is used to compute an inner approximation of the epigraph of $\lambda_i$ for each subsystem, denoted by $\epi(\lambda_i)$. Finally, given $\epi(\lambda_i)$, a centralized optimization solves for a set of parameters that makes the overall assume-guarantee contract valid.

In the remainder of the paper, Section \ref{sec:setup} presents the problem setup and reviews some fundamental tools and concepts; Section \ref{sec:set_invariance} presents the main result of proving set invariance with assume-guarantee reasoning for network systems; Section \ref{sec:epigraph} presents the epigraph algorithm that searches for a valid assume-guarantee contract with optimization; the proposed method is demonstrated with an example of microgrid control in Section \ref{sec:application} and finally we conclude in Section \ref{sec:conclusion}.
\section{Problem setup}\label{sec:setup}
In this section, we present the problem setup and some fundamental concepts and tools.

\textit{Nomenclature}  For the remainder of the paper, $\mathbb{N}$ denotes the set of natural numbers, $\mathbb{R}$ denotes the set of real numbers,  $\mathbb{B}=\{0,1\}$ denotes the set of binary numbers. We use $p\in\pazocal{P}$ to denote a parameter, with $\pazocal{P}$ as its domain. For a variable $x\in\pazocal{X}$, $x(t)$ denotes its value at the $t$-th time instance, $x(\cdot)$ denotes the evolution trajectory of $x$ for $t=0,1,...$ Correspondingly, $\pazocal{X}(\cdot)$ denotes the space of all possible evolutions of $x$. To avoid confusion, in a value iteration process, $p[i]$ denotes the value of a parameter $p$ after the $i$-th iteration.
\subsection{Network dynamics}
We consider a network dynamic system that is decomposed into subsystems with an assume-guarantee contract and treats the coupling between neighboring subsystems as bounded disturbance. Therefore, the following product of subsystems is considered:
\footnote{Note that for a general network dynamical system, the corresponding model would be defined over a graph structure \cite{sandell1978survey};  as noted, in the context of this paper, because we view the coupling between systems via bounded disturbances, we can consider a network of dynamical systems as simply the product system.}

\begin{equation}\label{eq:network_system}
  \Sigma = \Sigma_1 \times \Sigma_2 \times ... \times \Sigma_N.
\end{equation}
It is assumed that for each subsystem, there exists an output vector and all the coupling between subsystems are through the outputs of the discrete-time subsystems:
\begin{equation}\label{eq:dynamic_equation}
 \begin{array}{l}
x_i^+ = {f_i}\left( {{x_i},{y_{\pazocal{N}_i}},{u_i},{d_i}} \right),\\
{y_i} = {h_i}\left( {{x_i}} \right),i = 1,2,...,N,
\end{array}
\end{equation}
where $x_i\in\pazocal{X}_i\subseteq\mathbb{R}^{n_i}$, $u_i\in\pazocal{U}_i\subseteq\mathbb{R}^{m_i}$, $d_i\in\pazocal{D}_i\subseteq\mathbb{R}^{l_i}$, $y_i\in\pazocal{Y}_i\subseteq\mathbb{R}^{s_i}$ are the state, control input, exogenous disturbance input and output of $\Sigma_i$. $y_{\pazocal{N}_i}$ are the outputs of the neighboring subsystems of $\Sigma_i$. We use $\pazocal{N}_i$ to denote the indices of $\Sigma_i$'s neighboring subsystems:
\begin{equation}\label{eq:NB}
  y_{\pazocal{N}_i} = [y_{j_1}^\intercal,y_{j_2}^\intercal,...,y_{j_{N_i}}^\intercal]^\intercal, j_1,j_2,...j_{N_i}\in \pazocal{N}_i, \left| {\pazocal{N}_i} \right| = {N_i}.
\end{equation}
We denote the overall state space and output space as $\pazocal{X}=\pazocal{X}_1 \times ...\times \pazocal{X}_N$ and $\pazocal{Y}=\pazocal{Y}_1\times ...\times \pazocal{Y}_N$, respectively. Without loss of generality, it is assumed that for all $\Sigma_i$, the operating point is the origin and
\begin{equation}
  h_i(0)=0.
\end{equation}
The behavior $y_i(\cdot)$ is completely determined by $x_i(0)$, $y_{\pazocal{N}_i}(\cdot)$, $u_i(\cdot)$ and $d_i(\cdot)$, let $\pazocal{I}_i(\cdot)=\pazocal{X}_i\times \pazocal{Y}_{\pazocal{N}_i}(\cdot) \times \pazocal{U}_i(\cdot) \times \pazocal{D}_i(\cdot)$ denote the space of input trajectories and initial conditions of the system $\Sigma_i$ and $\pazocal{Y}_i(\cdot)$ is the space of all possible output trajectories of $\Sigma_i$. A dynamic system $\Sigma_i\subseteq2^{\pazocal{I}_i(\cdot)}\times 2^{\pazocal{Y}_i(\cdot)}$ is understood as a subset of possible input and output behavior pairs.
\subsection{Parameterized Signal Temporal Logic}\label{sec:pSTL}
The approach we take in this paper is to use assume-guarantee contract to resolve ``the curse of dimensionality'' for large network systems. This work differs from the assume-guarantee approach used for verification and synthesis of transition systems \cite{alur1999reactive,puasuareanu2008learning}, where the contract appears as a set of admissible states or actions, we present an assume-guarantee approach for dynamic systems with continuous input and state spaces. Parametric Signal Temporal Logic \cite{asarin2011parametric,bombara2016decision} is used to formally assess the behaviors of the systems, which is used as the template for specifications. A Signal Temporal Logic (STL) formula $\phi:\pazocal{X}(\cdot)\to\mathbb{B}$ is written using the following grammar:
\begin{equation}\label{eq:pSTL_format}
  \phi = \top \mid \mu \mid \lnot \phi \mid \phi_1\wedge\phi_2\mid\phi_1\mathbf{U}_I \phi_2,
\end{equation}
where $\top$ is the logic tautology, $\mu:\pazocal{X}\to\mathbb{B}$ is a logic proposition, $\lnot$ is Boolean negation and $\wedge$ is a Boolean AND, $I$ is an interval. The semantics are formally given as follows:
\begin{table}[H]
\centering
\begin{tabular}{lll}
$(\mathbf{x},t)\models \mu$                             &iff & $\mathbf{x}$ satisfies $\mu$ at time $t$\\
$(\mathbf{x},t)\models \lnot\phi$                      &iff & $(\mathbf{x},t)\not\models \phi$\\
$(\mathbf{x},t)\models \phi_1\wedge\phi_2$             &iff & $(\mathbf{x},t)\models \phi_1$ and $(\mathbf{x},t)\models \phi_2$\\
$(\mathbf{x},t)\models \phi_1\mathbf{U}_{[a,b]}\phi_2$ &iff & {\begin{tabular}{l}
                                                                $\exists t'\in t+[a,b]$ s.t. $(\mathbf{x},t')\models\phi_2$ \\
                                                                and $\forall t''\in[t,t'], (\mathbf{x},t'')\models\phi_1$
                                                              \end{tabular} }
\end{tabular}
\end{table}

From the above basic grammar, one can derive additional temporal operators $\lozenge_I \phi = \top \mathbf{U}_I \phi$, which means ``$\phi$ is eventually true during $I$,'' and $\square_I\phi = \lnot(\lozenge_I \lnot\phi)$, which means ``$\phi$ is always true in $I$''. When $I$ is not specified, it is assumed that by default $I=[0,\infty)$.
\begin{rem}
  A pSTL is extended to discrete-time signals by considering the sampling instances, as discussed in \cite{fainekos2009robustness}.
\end{rem}

For a STL formula $\phi$, $L(\phi) = \left\{x(\cdot)\in\pazocal{X}(\cdot)\mid x(\cdot) \models \phi\right\}$ is the language of the formula. A partial order is defined among the STL formulas as $\phi_1 \preceq \phi_2$ if $\forall x(\cdot)\in \pazocal{X}(\cdot), (x(\cdot)\models \phi_1) \Rightarrow (x(\cdot) \models \phi_2)$, or equivalently, $L(\phi_1)\subseteq L(\phi_2)$.

A pSTL formula is a STL formula with parameters. For example, $\phi = \square_{[a,b]}(x\ge c)$ can be represented as the following pSTL: $\varphi(a,b,c) = \square_{[a,b]}(x\ge c)$, where $a,b$ and $c$ are the parameters and $\varphi:\mathbb{R}^3\to\mathbb{B}$ is the pSTL template. For the rest of the paper, it is assumed that all the pSTL formulas are defined on partially ordered parameter domains. Given a parameter domain $\pazocal{P}$, the partial order is denoted as $\le_{\pazocal{P}}$.
\begin{defn}

A pSTL formula $\varphi(p)$ is \textit{monotonically increasing} if
\begin{equation}\label{eq:mono_in}
  \forall p_1,p_2\in \pazocal{P},\quad p_1 \le_{\pazocal{P}} p_2\Rightarrow \varphi(p_1)\preceq \varphi(p_2),
\end{equation}
and \textit{monotonically decreasing} vice versa.
\end{defn}
For example, $\varphi(p)=\lozenge_{[0,p]}(x\ge0)$ is monotonically increasing and $\varphi(p)=\square_{[0,\infty)}(x\ge p)$ is monotonically decreasing.

For a pSTL $\varphi:\pazocal{P}_1\to\mathbb{B}$, if $\pazocal{P}_1\subseteq \pazocal{P}_2$, then $\forall p\in\pazocal{P}_2$, $\varphi(p) = \varphi(p_\downarrow\pazocal{P}_1)$, where $\downarrow$ denotes the projection of $p$ onto $\pazocal{P}_1$.

\subsection{Assume-Guarantee Contract for Network Systems}\label{sec:network_ag}

Next, we present a framework that gives performance guarantee to the network system based on assume-guarantee reasoning. First, the definition of assume-guarantee contract is formally defined, which is adopted from \cite{kim2017small}.
\begin{defn}[Assume-Guarantee Contract]
An assume-guarantee contract $\pazocal{C}$ for the dynamic system $\Sigma$ is a pair $[\phi_a,\phi_g]$ consisting of an assumption $\phi_a$ and a guarantee $\phi_g$ that encode the requirement that the logical implication $\phi_a \Rightarrow\phi_g$ holds.
\end{defn}
An assume-guarantee contract $\pazocal{C}=[\phi_a,\phi_g]$ is true for a dynamic system $\Sigma$ if $\Sigma \cap L(\phi_a)\subseteq L(\phi_g)$, or written compactly as $\phi_a\wedge\Sigma\preceq\phi_g$ with a slight abuse of notation.

\begin{defn}[Parameterized Assume-Guarantee Contract]
An assume-guarantee contract $\pazocal{C}=[\phi_a,\phi_g]$ is in parameterized form if there exists a pSTL $\phi_a = \varphi_a(p_a)$, a pSTL $\phi_g = \varphi_g(p_g)$ and a mapping $\lambda:\pazocal{P}_a\to \pazocal{P}_g$ such that $\pazocal{C}(p_a)=[\varphi_a(p_a),\varphi_g(\lambda(p_a))]$.
\end{defn}

In particular, $\phi_a$ consists of two parts:
\begin{equation}\label{eq:assume_partition}
  {\phi _a} =\phi_{ae}\wedge\phi_{af} = {\varphi _{ae}}\left( {{p_{ae}}} \right) \wedge {\varphi _{af}}\left( {{p_{af}}} \right),
\end{equation}
where $\phi _{ae}$ is the specification for exogenous environment behavior and $\phi _{af}$ is the feedback specification.
\begin{defn}[Parameterized Network Assume-Guarantee Contract]
  For a network system defined in \eqref{eq:network_system}, a parameterized network assume-guarantee contract consists of individual parameterized assume-guarantee contracts $\pazocal{C}_i$ for each subsystem $\Sigma_i$. Each subcontract $\pazocal{C}_i$ consists of $\phi_a^i=\varphi_{ae}^i(p_{ae}^i)\wedge\varphi_{af}^i(p_{af}^i)$ and $\phi_g^i=\varphi_g^i(p_g^i)$. Denote $p_{ae}=\bigcup\limits_{i=1}^{N}{p_{ae}^i}$, $p_{af}=\bigcup\limits_{i=1}^{N}{p_{af}^i}$ and $p_g=\bigcup\limits_{i=1}^{N}{p_g^i}$ as the overall parameters for the environment specification, feedback specification and guarantee specification, with corresponding domain $\pazocal{P}_{ae}$, $\pazocal{P}_{af}$ and $\pazocal{P}_g$. For a specific subsystem, $\varphi_{af}^i(p_{af})=\varphi_{af}^i(p_{af}\downarrow\pazocal{P}_{af}^i)$ and the same for $\varphi_{ae}$ and $\varphi_g$.
  For simplicity of notation, let
  \begin{equation}
  \begin{aligned}
    \phi_{ae} &=&\varphi_{ae}(p_{ae}) &=&\bigwedge\limits_{i=1}^N{\phi_{ae}^i} &=&\bigwedge\limits_{i=1}^N{\varphi_{ae}^i}(p_{ae}^i) \\
    \phi_{af} &=&\varphi_{af}(p_{af}) &=&\bigwedge\limits_{i=1}^N{\phi_{af}^i} &=&\bigwedge\limits_{i=1}^N{\varphi_{af}^i}(p_{af}^i) \\
    \phi_g    &=&\varphi_g(p_g)       &=&\bigwedge\limits_{i=1}^N{\phi_g^i}    &=&\bigwedge\limits_{i=1}^N{\varphi_g^i}(p_g^i) .
  \end{aligned}
  \end{equation}
\end{defn}
\begin{rem}
  Note that some parameters may appear in more than one subcontract, the overall parameters $p_{ae}$, $p_{af}$ and $p_g$ remove the repetition.
\end{rem}
\section{Set invariance with assume-guarantee contract}\label{sec:set_invariance}
We now present the main result of this paper, which utilize assume-guarantee reasoning to prove set invariance for network systems.
\begin{thm}[Assume-guarantee reasoning]\label{thm:ag}
Consider the network system in \eqref{eq:dynamic_equation} associated with a parameterized network assume-guarantee contract. Suppose the following are satisfied:
\begin{itemize}
  \item[] 1.
  Each subsystem satisfies a subcontract $\pazocal{C}_i(p_a^i)$, that is, $\forall p_a^i \in \pazocal{P}_a^i,{\Sigma _i} \wedge \phi _a^i\preceq \phi _g^i$, where
  \begin{equation}\label{eq:local_ag_ass}
    \begin{aligned}
\phi _a^i &= \varphi _{ae}^i\left( {p_{ae}^i} \right) \wedge \varphi _{af}^i\left( {p_{af}^i} \right),\\
\phi _g^i &= \varphi _g^i\left( {{\lambda _i}\left( {p_{ae}^i,p_{af}^i} \right)} \right).
\end{aligned}
  \end{equation}
  \item[] 2. 
There exists a mapping $\Gamma:\pazocal{P}_g\to\pazocal{P}_{af}$ such that
\begin{equation}\label{eq:recursive_reasoning}
  \varphi_{af}^i(\gamma_i(p_g))\preceq \varphi_g(p_g),
\end{equation}
where $\gamma_i(p_g)=\Gamma(p_g)\downarrow\pazocal{P}_{af}^i$.
  \item[] 3. There exists environment parameters $p_{ae}\in\pazocal{P}_{ae}$ such that $\varphi_{ae}(p_{ae})$ is satisfied.
  \item[] 4. There exists feedback parameters $p_{af}[0]\in\pazocal{P}_{af}$ that $\varphi_{af}(p_{af}[0])$ is true.
\end{itemize}
 Given $p_{ae}^i$, define $\hat{\lambda}_i(\cdot)=\lambda_i(p_{ae}^i,\cdot)$.
Let
\begin{equation}\label{eq:hat_Lambda}
  \hat{\Lambda}(p_{af})=[\hat{\lambda}_1(p_{af}^1)^\intercal,\ \hat{\lambda}_2(p_{af}^2)^\intercal,\ ...\ \hat{\lambda}_N(p_{af}^N)^\intercal]^\intercal,
\end{equation}
then define recursively
\begin{equation}\label{eq:recursion}
\begin{aligned}
  p_g[k] &= \hat{\Lambda}(p_{af}[k]) \\
  p_{af}[k+1]&=\Gamma(p_g[k]).
\end{aligned}
\end{equation}
Under these conditions, the network system satisfies
\begin{equation}\label{eq:res_guarantee}
  \hat{\phi}_g=\bigwedge\limits_{k=0}^{\infty}{\varphi_g(p_g[k])}.
\end{equation}
\end{thm}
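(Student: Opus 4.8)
The plan is to prove, by induction on $k$, that the network system satisfies $\varphi_g(p_g[k])$ for every $k\ge 0$; since $\hat\phi_g=\bigwedge_{k=0}^\infty\varphi_g(p_g[k])$ in \eqref{eq:res_guarantee} is exactly the conjunction of these statements, establishing the per-$k$ claim immediately yields the theorem. Throughout, the environment parameters $p_{ae}$ supplied by Condition~3 are held fixed, so $\varphi_{ae}(p_{ae})$ is true at every stage; this is precisely what lets me replace $\lambda_i(p_{ae}^i,\cdot)$ by the reduced map $\hat\lambda_i$ of \eqref{eq:hat_Lambda} and iterate only on the feedback parameters in the recursion \eqref{eq:recursion}.

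For the base case I would combine Conditions~3 and~4: $\varphi_{ae}^i(p_{ae}^i)$ and $\varphi_{af}^i(p_{af}^i[0])$ both hold, so the full assumption $\phi_a^i=\varphi_{ae}^i(p_{ae}^i)\wedge\varphi_{af}^i(p_{af}^i[0])$ is satisfied for each $i$. Applying the subcontract of Condition~1, $\Sigma_i\wedge\phi_a^i\preceq\phi_g^i$, forces the guarantee $\varphi_g^i(\hat\lambda_i(p_{af}^i[0]))$ to hold on each subsystem. Because $p_g[0]=\hat\Lambda(p_{af}[0])$ stacks exactly these $\hat\lambda_i(p_{af}^i[0])$, and because on the product system $\Sigma$ the network guarantee $\varphi_g(p_g[0])=\bigwedge_i\varphi_g^i(p_g^i[0])$ is by definition the conjunction of the per-subsystem guarantees, I obtain $\varphi_g(p_g[0])$.

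The inductive step is where the assume-guarantee loop closes. Assuming $\varphi_g(p_g[k])$ holds, I would use Condition~2 to transfer this network-level guarantee back into a feedback assumption: with $p_{af}^i[k+1]=\gamma_i(p_g[k])=\Gamma(p_g[k])\downarrow\pazocal{P}_{af}^i$, relation \eqref{eq:recursive_reasoning} ties $\varphi_{af}^i(p_{af}^i[k+1])$ to $\varphi_g(p_g[k])$, so that the feedback specification at stage $k+1$ is in force on every subsystem. Conjoining with the persistent $\varphi_{ae}(p_{ae})$ reconstitutes the full assumption $\phi_a^i$ at stage $k+1$, and a second application of Condition~1 yields $\varphi_g^i(\hat\lambda_i(p_{af}^i[k+1]))$ for each $i$. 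Since $p_g[k+1]=\hat\Lambda(p_{af}[k+1])$, the product-system conjunction again gives $\varphi_g(p_g[k+1])$, completing the induction and hence \eqref{eq:res_guarantee}.

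I expect the main obstacle to be the inductive step rather than the base case: I must make sure that Condition~2 genuinely propagates the \emph{network} guarantee $\varphi_g(p_g[k])$ into each \emph{local} feedback assumption $\varphi_{af}^i(p_{af}^i[k+1])$ in the logical direction the induction requires, keeping careful track of the projection $\gamma_i=\Gamma\downarrow\pazocal{P}_{af}^i$ and of which parameters are shared across subcontracts (cf.\ the remark that $p_{af}$ and $p_g$ remove repetitions). A secondary point needing justification is that the product-system semantics of $\Sigma$ legitimately let me pass between the per-subsystem guarantees $\varphi_g^i$ and their conjunction $\varphi_g$, and that the environment assumption persists unchanged across all stages so that $\hat\lambda_i$ may be used uniformly in \eqref{eq:recursion}.
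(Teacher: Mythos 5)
Your induction is the same argument the paper makes, just packaged differently: the paper writes out the infinite chain of implications (base assumptions from Conditions~3 and~4, then alternating applications of Condition~1 and Condition~2) rather than formalizing it as induction on $k$, and both arguments discharge the per-stage guarantee and feedback assumption exactly as you describe. On the one point you flagged, note that \eqref{eq:recursive_reasoning} as printed reads $\varphi_{af}^i(\gamma_i(p_g))\preceq\varphi_g(p_g)$, which under the paper's definition of $\preceq$ is the wrong direction for closing the loop; the paper's own proof (the step $\bigwedge_i\varphi_g^i(p_g^i[k])\Rightarrow\bigwedge_i\varphi_{af}^i(p_{af}^i[k+1])$ in \eqref{eq:seq_pSTL}, and the instantiation of Condition~2 via \eqref{eq:invariance_a2} in Theorem~\ref{thm:set_invariance}) uses it in the direction $\varphi_g(p_g)\preceq\varphi_{af}^i(\gamma_i(p_g))$, exactly as you do, so your reading is the intended one and the discrepancy is a typo in the theorem statement rather than a gap in your proof.
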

\begin{proof}
  Since $p_{ae}^i$ and $p_{af}^i[0]$ exists so that $\phi_{ae}^i$ and $\phi_{af}^i[0]$ are satisfied, we can build the following infinite sequence of pSTL that the network system satisfies from \eqref{eq:local_ag_ass}, \eqref{eq:recursive_reasoning} and \eqref{eq:recursion}:
  \begin{equation}\label{eq:seq_pSTL}
  \begin{array}{l}
    \bigwedge\limits_{i=1}^{N}{\varphi_{ae}^i(p_{ae}^i)} \wedge \bigwedge\limits_{i=1}^{N}{\varphi_{af}^i(p_{af}^i[0])} \wedge\\
    \left( \bigwedge\limits_{i=1}^{N}{\varphi_{ae}^i(p_{ae}^i)} \wedge \bigwedge\limits_{i=1}^{N}{\varphi_{af}^i(p_{af}^i[0])} \Rightarrow \bigwedge\limits_{i=1}^{N}{\varphi_g^i(p_g^i[0])}\right)\wedge \\
    \left(\bigwedge\limits_{i=1}^{N}{\varphi_g^i(p_g^i[0])} \Rightarrow \bigwedge\limits_{i=1}^{N}{\varphi_{af}^i(p_{af}^i[1])} \right)\wedge\\
    ...
  \end{array}
\end{equation}
which implies \eqref{eq:res_guarantee}.
\end{proof}
Next, we use Theorem \ref{thm:ag} to show set invariance of a network system using assume-guarantee reasoning. First, we give the definition of a robust control invariant set.
\begin{defn}
For the dynamic system described in \eqref{eq:dynamic_equation}, given $\pazocal{U}_i$, $\pazocal{D}_i$ and $y_{\pazocal{N}_i}^{\max}$, a set $\pazocal{S}_i\subseteq\mathbb{R}^{n_i}$ is \textit{robust control invariant} if
\begin{equation}\label{eq:RCI}
\begin{array}{c}
\forall x_i\in \pazocal{S}_i,\forall d_i\in\pazocal{D}_i,\forall \left|y_{\pazocal{N}_i}\right|\le y_{\pazocal{N}_i}^{\max},\exists u_i\in\pazocal{U}_i \\
s.t.\quad x_i^+ = f_i(x_i,y_{\pazocal{N}_i}.u_i,d_i)\in \pazocal{S}_i.
\end{array}
\end{equation}
\end{defn}
\begin{thm}[Set invariance of a network system with assume-guarantee contract]\label{thm:set_invariance}
  Consider the network system described in \eqref{eq:dynamic_equation}, suppose that all $y_i$ are scalars and there exists a feedback controller $u_i=k(x_i,y_{\pazocal{N}_i},d_i)$ such that for a given bound on $\left|y_{\pazocal{N}_i}\right|\le y_{\pazocal{N}_i}^{\max}$, a given bound $\pazocal{D}_i$ of $d_i$ and a given set $\pazocal{S}_i$ of $x_i$, the following is true:
  \begin{equation}\label{eq:invariance_a1}
  \begin{array}{c}
    \forall x_i\in \pazocal{S}_i,\quad \forall d_i\in\pazocal{D}_i,\quad \forall \left|y_{\pazocal{N}_i}\right|\le y_{\pazocal{N}_i}^{\max}, \\
    x_i^+={f_i}\left( {{x_i},{y_{\pazocal{N}_i}},{u_i},{d_i}} \right)\in \pazocal{S}_i,
  \end{array}
  \end{equation}
  \begin{equation}\label{eq:invariance_a2}
    \max\limits_{x_i\in\pazocal{S}_i}{\left|h_i(x)\right|}\le y_i^{\max},
  \end{equation}
  where $y_{\pazocal{N}_i}^{\max}$ is a projection of $y^{\max}$ onto $\pazocal{Y}_{\pazocal{N}_i}$.
  Then
  \begin{equation}\label{eq:inv_thm_ag}
  \begin{array}{c}
    \bigwedge\limits_{i=1}^{N} {\left(x_i(0)\in\pazocal{S}_i\wedge \square ( u_i=k_i(x_i,y_{\pazocal{N}_i},d_i))\wedge \square ( d_i\in\pazocal{D}_i)\right)} \\
    \Rightarrow\bigwedge\limits_{i=1}^{N}{\square (x_i\in \pazocal{S}_i)},
  \end{array}
  \end{equation}
   that is, $\pazocal{S}_1 \times \pazocal{S}_2 \times ...\times \pazocal{S}_N$ is robust control invariant.
\end{thm}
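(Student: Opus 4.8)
The plan is to reduce Theorem~\ref{thm:set_invariance} to an application of Theorem~\ref{thm:ag} by encoding the two pointwise conditions \eqref{eq:invariance_a1}--\eqref{eq:invariance_a2} as a parameterized network assume-guarantee contract. For each subsystem I would take the guarantee template to be the output bound $\varphi_g^i(p_g^i)=\square(|y_i|\le p_g^i)$ with parameter $p_g^i=y_i^{\max}$; the feedback assumption $\varphi_{af}^i(p_{af}^i)=\square(|y_{\pazocal{N}_i}|\le p_{af}^i)$ with parameter $p_{af}^i=y_{\pazocal{N}_i}^{\max}$; and the environment assumption $\varphi_{ae}^i(p_{ae}^i)$ encoding the three local facts $x_i(0)\in\pazocal{S}_i$, $\square(u_i=k_i(x_i,y_{\pazocal{N}_i},d_i))$, and $\square(d_i\in\pazocal{D}_i)$. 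Each output being scalar keeps these bounds one-dimensional, and every template is monotone (in the componentwise order on the bound parameters), which matches the structure of the contract framework in Section~\ref{sec:network_ag}.

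First I would establish hypothesis~1 of Theorem~\ref{thm:ag}, namely the local subcontract $\Sigma_i\wedge\phi_a^i\preceq\phi_g^i$. This is a one-step induction on $t$: at $t=0$ the environment assumption gives $x_i(0)\in\pazocal{S}_i$, and whenever $x_i(t)\in\pazocal{S}_i$, the feedback assumption supplies $|y_{\pazocal{N}_i}(t)|\le y_{\pazocal{N}_i}^{\max}$ while the environment assumption supplies $d_i(t)\in\pazocal{D}_i$ together with the controller $k_i$, so condition \eqref{eq:invariance_a1} yields $x_i(t+1)\in\pazocal{S}_i$. Hence the local reasoning already proves $\square(x_i\in\pazocal{S}_i)$, and \eqref{eq:invariance_a2} upgrades this to $\square(|y_i|\le y_i^{\max})=\varphi_g^i(y_i^{\max})$. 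This defines the local map $\lambda_i(p_{ae}^i,p_{af}^i)=\max_{x_i\in\pazocal{S}_i}|h_i(x_i)|=y_i^{\max}$, which here is constant in its arguments.

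Next I would supply the mapping $\Gamma$ of hypothesis~2. Because $y_{\pazocal{N}_i}^{\max}$ is by assumption the projection of the global output-bound vector $y^{\max}$ onto $\pazocal{Y}_{\pazocal{N}_i}$, I take $\Gamma$ to be exactly this gather-and-project operation, so that $\gamma_i(p_g)=\Gamma(p_g)\downarrow\pazocal{P}_{af}^i$ reads off, for each neighbor $j\in\pazocal{N}_i$, the bound $y_j^{\max}$ carried in $p_g$. Then $\varphi_{af}^i(\gamma_i(p_g))=\square\big(\bigwedge_{j\in\pazocal{N}_i}|y_j|\le y_j^{\max}\big)$, which is implied by $\varphi_g(p_g)=\bigwedge_{j=1}^N\square(|y_j|\le y_j^{\max})$; this is precisely the relation $\varphi_{af}^i(\gamma_i(p_g))\preceq\varphi_g(p_g)$ of \eqref{eq:recursive_reasoning}. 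Hypotheses~3 and~4 are immediate, taking the hypothesized controller, the given $\pazocal{D}_i$ and $x_i(0)\in\pazocal{S}_i$ to satisfy $\varphi_{ae}$, and $p_{af}[0]=y_{\pazocal{N}}^{\max}$ to satisfy $\varphi_{af}(p_{af}[0])$. Finally I trace the recursion \eqref{eq:recursion}: since each $\hat\lambda_i$ returns the fixed value $y_i^{\max}$ we get $p_g[k]=y^{\max}$ for all $k$, and since $\Gamma$ merely re-assembles $y^{\max}$ into the neighbor-bound vectors we get $p_{af}[k]=y_{\pazocal{N}}^{\max}$ for all $k$, so the recursion is stationary, sitting at its fixed point already at $k=0$. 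Theorem~\ref{thm:ag} then delivers $\hat\phi_g=\bigwedge_{i=1}^N\square(|y_i|\le y_i^{\max})$; combining this certified consistency of all feedback assumptions with the local implication $\square(x_i\in\pazocal{S}_i)$ extracted in the induction step gives \eqref{eq:inv_thm_ag}, i.e. robust control invariance of $\pazocal{S}_1\times\cdots\times\pazocal{S}_N$.

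The step I expect to be the main obstacle is verifying hypothesis~2, that the guarantees collectively dominate every local feedback assumption through a single consistent map $\Gamma$. Here the scalar-output assumption and the definition of $y_{\pazocal{N}_i}^{\max}$ as a projection of $y^{\max}$ do the real work: they ensure that the bound a subsystem must assume on a neighbor's output is exactly the bound that neighbor is certified to provide, so the contract closes with no gap and the recursion lands on its fixed point immediately. The remaining delicacy is purely bookkeeping with the projections $\downarrow$ relating the global domains $\pazocal{P}_g,\pazocal{P}_{af}$ to their per-subsystem restrictions $\pazocal{P}_g^i,\pazocal{P}_{af}^i$.
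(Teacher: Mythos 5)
Your overall strategy --- instantiating Theorem \ref{thm:ag} with a contract built from the bounds in \eqref{eq:invariance_a1}--\eqref{eq:invariance_a2} --- is the right one, but your choice of template breaks the theorem's hypotheses and reintroduces exactly the circularity that Theorem \ref{thm:ag} exists to avoid. You take $\varphi_{af}^i(p_{af}^i)=\square(|y_{\pazocal{N}_i}|\le p_{af}^i)$ with an \emph{unbounded} time horizon, and then declare Hypothesis 4 (``there exists $p_{af}[0]$ such that $\varphi_{af}(p_{af}[0])$ \emph{is true}'') to be immediate by choosing $p_{af}[0]=y_{\pazocal{N}}^{\max}$. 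It is not: with your template, $\varphi_{af}(y_{\pazocal{N}}^{\max})$ asserts that every output is bounded for all time, which is essentially the conclusion you are trying to prove, and nothing in the hypotheses of Theorem \ref{thm:set_invariance} gives it to you a priori. Moreover, since your $\lambda_i$ is constant the recursion \eqref{eq:recursion} is stationary and makes no progress, so the composition degenerates to ``$A\Rightarrow B$ and $B\Rightarrow A$, therefore $A\wedge B$,'' which is unsound. (You flag Hypothesis 2 as the delicate step, but that one is fine; the fatal gap is Hypothesis 4.)

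The paper's proof avoids this by making the \emph{time horizon} the contract parameter: $\varphi_{af}^i(T_i)=\square_{[0,T_i]}|y_{\pazocal{N}_i}|\le y_{\pazocal{N}_i}^{\max}$ and $\varphi_g^i(T_i)=\square_{[0,T_i]}(x_i\in\pazocal{S}_i)$, with $\hat\lambda_i(T_i)=T_i+T_s$ and $\Gamma(T)=T$. Hypothesis 4 is then discharged at $T_i=0$, where the feedback assumption constrains the outputs only at the initial instant and follows from $x_j(0)\in\pazocal{S}_j$ together with \eqref{eq:invariance_a2}; each pass through the recursion extends the horizon by one sampling step $T_s$, and the infinite conjunction \eqref{eq:res_guarantee} recovers $\square_{[0,\infty)}(x_i\in\pazocal{S}_i)$. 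In other words, the induction over assume-guarantee iterations is an induction over time, and the strict increment $T_i\mapsto T_i+T_s$ is what breaks the circularity. Your local one-step induction on $t$ contains the same idea, but it must live in the contract parameters, where the composition theorem can exploit it, rather than inside the verification of a single subcontract whose all-time assumption you cannot independently certify.
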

\begin{proof}
  Let
  \begin{equation}\label{eq:inv_ae}\
  \begin{array}{c}
    \phi_{ae}^i=(x_i(0)\in\pazocal{S}_i)\wedge  \square\left(d_i\in\pazocal{D}_i\right)\\
    \wedge\square\left(u_i=k(x_i,y_{\pazocal{N}_i},d_i)\right),
  \end{array}
  \end{equation}
\begin{equation}\label{eq:inv_af}
  \phi_{af}^i = \varphi_{af}^i(T_i) = \square_{[0,T_i]}\left|y_{\pazocal{N}_i}\right|\le y_{\pazocal{N}_i}^{\max},
\end{equation}
\begin{equation}\label{eq:inv_g}
  \phi_{g}^i = \varphi_{g}^i(T_i) = \square_{[0,T_i]}{x_i\in\pazocal{S}_i};
\end{equation}
and let $\hat{\lambda}_i(T_i)=T_i+T_s$, $\Gamma(T)=T$, where $T_s$ is the time step of the discrete dynamics in \eqref{eq:dynamic_equation}, $T=[T_1,T_2,...,T_N]^\intercal$.

Among the 4 assumptions of Theorem \ref{thm:ag}, Assumption 1 is satisfied due to \eqref{eq:invariance_a1}, Assumption 2 is satisfied by \eqref{eq:invariance_a2} with $\Gamma$ defined above. Assumption 3 is satisfied by \eqref{eq:inv_ae} and Assumption 4 is satisfied by setting $T_i=0$ for all $i$ in \eqref{eq:inv_af}. Then, by Theorem \ref{thm:ag}, the guarantee for the network system is
\begin{equation}\label{eq:inv_thm_g}
  \hat{\phi}_g^i=\bigwedge\limits_{k=0}^{\infty}{\square_{[0,k\cdot T_s]}{x_i\in\pazocal{S}_i}},
\end{equation}
which is simplified to
\begin{equation}\label{eq:inv_thm_g1}
  \forall i=1,...,N, \square_{[0,\infty)}{x_i\in\pazocal{S}_i}.
\end{equation}
\end{proof}

\begin{lem}\label{lem:check_out}
  Consider the following assume-guarantee contract for a subsystem $\Sigma_i$:
  \begin{equation}\label{eq:inv_ag}
    \begin{aligned}
\varphi _a^i\left( {{y_{\pazocal{N}_i}^{\max }}} \right)&: =\square \left| {{y_{\pazocal{N}_i}}} \right| \le y_{\pazocal{N}_i}^{\max }\\
\varphi _g^i\left( {{{\bar y}_i^{\max }}} \right)&: =\square \left| {{y_i}} \right| \le \bar y_i^{\max }
\end{aligned}
\end{equation}
Suppose there exist monotonically increasing functions $\lambda_i$ such that
\begin{equation}
  \forall i=1,...N,\forall y_{\pazocal{N}_i}^{\max }\ge0, \varphi _a^i\left( {{y_{\pazocal{N}_i}^{\max }}} \right) \to \varphi _g^i\left( {{\lambda _i}\left( {{y_{\pazocal{N}_i}^{\max }}} \right)} \right)
\end{equation}
  and
  \begin{equation}\label{eq:check_out0}
    \exists {y^{\max }}[0]\; s.t.\;\;\forall i = 1,...,N,\;{\lambda _i}\left( {y_{\pazocal{N}{_i}}^{\max }}[0] \right) \le y_i^{\max }[0]
  \end{equation}
  then the network system satisfies
  \begin{equation}
    \left| {y(0)} \right| \le {{ y}^{\max }[0]}\quad \Rightarrow \quad\square\left| y \right| \le \Lambda(y^{\max}[0]),
  \end{equation}
\\
  where $\Lambda ({y^{\max }}) = {\left[ {{\lambda _1}\left( {y_{\pazocal{N}{_1}}^{\max }} \right),...,{\lambda _N}\left( {y_{\pazocal{N}{_N}}^{\max }} \right)} \right]^\intercal}$
\end{lem}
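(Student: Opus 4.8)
The plan is to recognize Lemma~\ref{lem:check_out} as a direct specialization of Theorem~\ref{thm:ag} and to instantiate the abstract contract templates with the concrete magnitude bounds in \eqref{eq:inv_ag}. Concretely, I would identify $\varphi_{af}^i(y_{\pazocal{N}_i}^{\max}) = \square|y_{\pazocal{N}_i}|\le y_{\pazocal{N}_i}^{\max}$ as the feedback specification and $\varphi_g^i(\bar y_i^{\max}) = \square|y_i|\le \bar y_i^{\max}$ as the guarantee, take the local map to be $\hat\lambda_i = \lambda_i$, and choose $\Gamma:\pazocal{P}_g\to\pazocal{P}_{af}$ to be the projection that reads off, from a global output-bound vector $y^{\max}$, the sub-vector of bounds on the neighbors of each subsystem. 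There is no nontrivial exogenous specification here, so Assumption~3 of Theorem~\ref{thm:ag} holds vacuously, and the hypothesis $|y(0)|\le y^{\max}[0]$ serves to initialize the reasoning.

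With these identifications I would check the four hypotheses of Theorem~\ref{thm:ag} in turn. Assumption~1 is exactly the contract implication $\varphi_a^i(y_{\pazocal{N}_i}^{\max})\to\varphi_g^i(\lambda_i(y_{\pazocal{N}_i}^{\max}))$ assumed in the lemma. For Assumption~2, the key observation is that, because $\pazocal{N}_i\subseteq\{1,\dots,N\}$, the feedback assumption $\varphi_{af}^i(\gamma_i(p_g))$ is nothing but the restriction of the global guarantee $\varphi_g(p_g)=\bigwedge_{j}\square|y_j|\le y_j^{\max}$ to the neighbor indices; hence it is entailed by $\varphi_g(p_g)$ with $\Gamma$ the projection above. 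Assumption~4 is supplied by \eqref{eq:check_out0}: setting $p_{af}[0]$ equal to the neighbor bounds extracted from $y^{\max}[0]$, condition \eqref{eq:check_out0} is precisely the statement that this starting vector is admissible, i.e. that $\Lambda(y^{\max}[0])\le y^{\max}[0]$ componentwise.

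Invoking Theorem~\ref{thm:ag} then yields the guaranteed behavior $\hat\phi_g=\bigwedge_{k=0}^{\infty}\varphi_g(p_g[k])$, where the recursion \eqref{eq:recursion} specializes to $p_g[k]=\Lambda(p_{af}[k])$ and $p_{af}[k+1]=\Gamma(p_g[k])$, started from $p_g[0]=\Lambda(y^{\max}[0])$. To collapse this infinite conjunction to the single claimed bound, I would argue by induction that the bound sequence is non-increasing: \eqref{eq:check_out0} gives the base case $p_g[0]\le y^{\max}[0]$, and since each $\lambda_i$ is monotonically increasing (so that $\Lambda$ and the projection $\Gamma$ are order preserving) the inequality propagates to $p_g[k+1]\le p_g[k]$ for all $k$. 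Consequently every conjunct satisfies $p_g[k]\le p_g[0]=\Lambda(y^{\max}[0])$, so $\hat\phi_g\preceq\varphi_g(p_g[0])=\square|y|\le\Lambda(y^{\max}[0])$, which is exactly the conclusion.

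I expect the monotone, order-preserving step to be the main obstacle, rather than the bookkeeping of the template identification. The delicate points are, first, verifying that monotonicity of the scalar maps $\lambda_i$ survives the coupling introduced by $\Gamma$ (each component of $p_{af}[k]$ aggregates several entries of $p_g[k]$ through the neighbor structure), so that \eqref{eq:check_out0} truly serves as a pre-fixed point of the coupled iteration $\Lambda\circ\Gamma$; and second, reconciling the pointwise-in-time initial condition $|y(0)|\le y^{\max}[0]$ with the always-operator appearing in $\varphi_{af}^i$ --- this is where the one-step causal structure of the network dynamics, whereby the next output depends only on current neighbor outputs, implicitly supplies the time induction underlying Theorem~\ref{thm:ag}.
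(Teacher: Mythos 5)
Your overall strategy --- specialize Theorem~\ref{thm:ag} --- is the paper's intended route (the paper omits the proof, noting only that it mirrors Theorem~\ref{thm:set_invariance}), but the instantiation you give does not close. The problem is Assumption~4. With the untimed templates $\varphi_{af}^i(p) = \square|y_{\pazocal{N}_i}|\le p$ that you adopt, Assumption~4 demands that the STL formula $\bigwedge_i\square|y_{\pazocal{N}_i}|\le y_{\pazocal{N}_i}^{\max}[0]$ actually hold of the system's trajectories for all time. The validity condition \eqref{eq:check_out0} is an inequality between parameters, and $|y(0)|\le y^{\max}[0]$ constrains time $0$ only; neither discharges an all-time obligation, and asserting it outright is circular, since an all-time bound on the outputs is precisely what the lemma is supposed to produce. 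You flag this at the end as a ``delicate point'' resolved by an ``implicit'' time induction, but that induction is the entire content of the lemma and has to appear in the proof, not in a closing remark.

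The way Theorem~\ref{thm:set_invariance} (and hence the omitted proof) handles this is to put a horizon into the template: $\varphi_{af}^i(y_{\pazocal{N}_i}^{\max},T)=\square_{[0,T]}|y_{\pazocal{N}_i}|\le y_{\pazocal{N}_i}^{\max}$, with $\hat{\lambda}_i$ advancing $T$ by one time step while applying $\lambda_i$ to the magnitude, and $\Gamma$ holding the magnitude parameters fixed at $y^{\max}[0]$. Then Assumption~4 is discharged by $|y(0)|\le y^{\max}[0]$ with $T=0$, and \eqref{eq:check_out0} is used where it is actually needed: in Assumption~2, to show that the guarantee $\square_{[0,kT_s]}|y_j|\le\lambda_j(y_{\pazocal{N}_j}^{\max}[0])$ entails the next feedback assumption $\square_{[0,kT_s]}|y_{\pazocal{N}_i}|\le y_{\pazocal{N}_i}^{\max}[0]$ with the same magnitude parameters, so the loop closes with a constant bound and a growing horizon. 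Your proposal instead spends the validity condition on Assumption~4 and spends effort proving that the bound sequence is non-increasing, which is not needed for the stated conclusion (the conjunction $\hat{\phi}_g$ already contains $\varphi_g(p_g[0])$ as a conjunct); that monotone-decrease argument belongs to the value-iteration refinement \eqref{eq:refine}, not to this lemma.
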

The proof follows similar reasoning as Theorem \ref{thm:set_invariance} and is omitted here.

The condition in \eqref{eq:check_out0} is referred to as the validity condition, which is crucial to our assume-guarantee approach of computing invariant sets for network systems.

Lemma \ref{lem:check_out} shows that when the validity condition in \eqref{eq:check_out0} is satisfied, one can further refine the contract with the following value iteration:
\begin{equation}\label{eq:refine}
  {y^{\max }}[k + 1] = \Lambda \left( {{y^{\max }}[k]} \right).
\end{equation}
\begin{prop}
  \eqref{eq:refine} always converges when \eqref{eq:check_out0} is satisfied.
\end{prop}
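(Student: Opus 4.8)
The plan is to exploit the monotonicity of the coordinate maps $\lambda_i$ together with the validity condition to show that the iterates form a nonincreasing, bounded-below sequence, and then to invoke the monotone convergence theorem coordinatewise. The whole argument is an order-theoretic one built on the partial order $\le_{\pazocal{P}}$ introduced in Section \ref{sec:pSTL}.

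First I would establish that $\Lambda$ is monotone. Since the $i$-th coordinate of $\Lambda(y^{\max})$ is $\lambda_i(y_{\pazocal{N}_i}^{\max})$ and reads only the subvector $y_{\pazocal{N}_i}^{\max}$ of $y^{\max}$, and since each $\lambda_i$ is assumed monotonically increasing, it follows that $y^{\max}\le z^{\max}$ (componentwise) implies $\Lambda(y^{\max})\le\Lambda(z^{\max})$. Next I would rewrite the validity condition \eqref{eq:check_out0} compactly: the requirement $\lambda_i(y_{\pazocal{N}_i}^{\max}[0])\le y_i^{\max}[0]$ for all $i$ is exactly $\Lambda(y^{\max}[0])\le y^{\max}[0]$, i.e. $y^{\max}[0]$ is a supersolution of the fixed-point relation $y=\Lambda(y)$. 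Applying $\Lambda$ to this inequality and using monotonicity gives $\Lambda(y^{\max}[1])\le\Lambda(y^{\max}[0])$, and an easy induction yields $y^{\max}[k+1]=\Lambda(y^{\max}[k])\le y^{\max}[k]$ for every $k$, so each scalar sequence $\{y_i^{\max}[k]\}_k$ is nonincreasing.

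The remaining ingredient is a lower bound. Because each $y_i^{\max}$ is a bound on the magnitude $|y_i|$ appearing in $\varphi_g^i$, it lives in $\mathbb{R}_{\ge 0}$, and each $\lambda_i$ maps nonnegative arguments to nonnegative values; hence $y_i^{\max}[k]\ge 0$ for all $k$, which also guarantees the recursion stays inside the domain of $\Lambda$ and is well-defined at every step. A nonincreasing real sequence bounded below by $0$ converges, so each coordinate converges and therefore $y^{\max}[k]\to y^\star$ for some $y^\star\ge 0$, establishing the claim. I would add as a remark that if the $\lambda_i$ are continuous, the limit satisfies $y^\star=\Lambda(y^\star)$ and is in fact the largest fixed point dominated by $y^{\max}[0]$, which is the object of interest for contract refinement in \eqref{eq:refine}.

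The step I expect to be the main obstacle is supplying the lower bound cleanly, since it is the only place where one must appeal to the semantics of the contract rather than to pure order theory: one has to argue that the magnitude bounds $y_i^{\max}$ are intrinsically nonnegative and that $\Lambda$ preserves this nonnegative domain so that the iteration never leaves the region where the $\lambda_i$ are defined. Once the ambient order structure and this nonnegativity are fixed, the monotone-map, supersolution, and induction steps are routine.
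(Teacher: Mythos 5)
Your proof is correct and follows essentially the same route as the paper's: both use monotonicity of $\Lambda$ plus the validity condition to show $0\le y^{\max}[k+1]\le y^{\max}[k]$ and then conclude by monotone/bounded convergence. Your version simply spells out the induction and the nonnegativity lower bound that the paper leaves implicit.
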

\begin{proof}
 By the monotonicity of $\Lambda$ and the definition of $y^{\max }$, we have
 \begin{equation}
   \forall k=0,1,2,...,\;\; 0 \le y^{\max }[k + 1]\le y^{\max }[k].
 \end{equation}
 Then by the bounded convergence theorem, the value iteration converges.
\end{proof}

\section{Search for Assume-Guarantee Contract with Epigraph Method}\label{sec:epigraph}
In this section, we present the epigraph method that searches for an assume-guarantee contract that meets the validity condition. In particular, we show that the epigraph method can be viewed as an extension of the classic small gain theorem to network systems with nonlinear `gains'.
\subsection{Epigraph representation of the validity condition}
With Lemma \ref{lem:check_out}, the key problem now is to find a contract that meets the validity condition, which means to find $y^{\max}$ such that
\begin{equation}\label{eq:check_out1}
  \forall i=1,...,N,\quad \lambda_i(y_{\pazocal{N}_i}^{\max})\le y_i^{\max}.
\end{equation}
We propose an epigraph algorithm to search for such a $y^{\max}$. The main idea is to look at each $\lambda_i:\pazocal{Y}_{\pazocal{N}_i}\to\pazocal{Y}_i$ in Lemma \ref{lem:check_out}. The condition in \eqref{eq:check_out1} is equivalent to the following condition:


\begin{equation}\label{eq:epi}
  [y_{\pazocal{N}_i}^{\max};y_i^{\max}]\in\epi(\lambda_i),
\end{equation}
where $\epi(\cdot)$ denotes the epigraph of a scalar function. Suppose the epigraph of each $\lambda_i$ is known, the search for an initial valid contract can be formulated as the following feasibility problem:
\begin{equation}\label{eq:search_initial_contract}
  \begin{aligned}
\mathop {\min }\limits_{{y^{\max }} \ge {\bf{0}}} \;&0\\
&\mathrm{s.t.}\;\forall i = 1,...,N, \left[y_{\pazocal{N}_i}^{\max};y_i^{\max}\right]\in\epi(\lambda_i).
\end{aligned}
\end{equation}
If $\epi(\lambda_i)$ is hard to get, one can replace $\epi(\lambda_i)$ in \eqref{eq:search_initial_contract} with its inner approximation and the optimization would still generate a valid contract if a solution is obtained. Once a valid contract is obtained, it can be further refined by the value iteration shown in \eqref{eq:refine}.
\begin{exmp}
Consider the two systems interconnection network shown in Fig. \ref{fig:two_interconnection}.
\begin{figure}[H]
    \centering
    \includegraphics[width=1.8in]{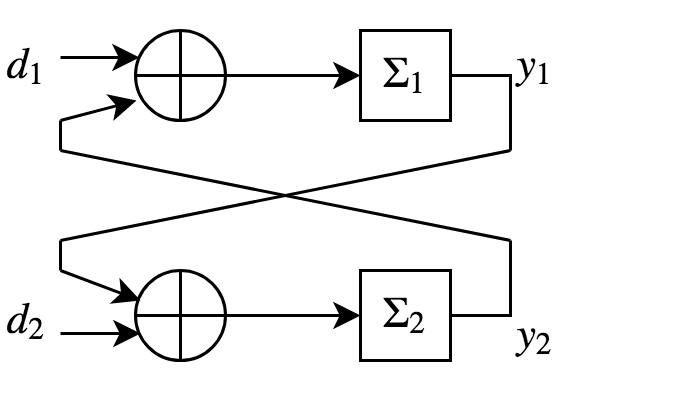}
    \caption{Two systems interconnection network}\label{fig:two_interconnection}
\end{figure}
Suppose that the two subsystems satisfy
  \begin{equation}\label{eq:inter_assume}
    \begin{array}{l}
{\left\| {{y_1}} \right\|_\infty } \le {\mu _1}{\left\| {{d_1}} \right\|_\infty } + {\nu _1}{\left\| {{y_2}} \right\|_\infty }\\
{\left\| {{y_2}} \right\|_\infty } \le {\mu _2}{\left\| {{d_2}} \right\|_\infty } + {\nu _2}{\left\| {{y_1}} \right\|_\infty }
\end{array}
  \end{equation}
In addition, the small gain condition is satisfied, i.e.,
\begin{equation}\label{eq:small_gain}
  \nu_1\cdot\nu_2<1.
\end{equation}
Then by small gain theorem, the interconnected network is stable and
\begin{equation}\label{eq:small_gain_res}
  \begin{aligned}
{\left\| {{y_1}} \right\|_\infty } &\le & \frac{{{\mu _1}}}{{1 - {\nu _1}{\nu _2}}}{\left\| {{d_1}} \right\|_\infty } &+ \frac{{{\mu _2}{\nu _1}}}{{1 - {\nu _1}{\nu _2}}}{\left\| {{d_2}} \right\|_\infty }\\
{\left\| {{y_2}} \right\|_\infty } &\le & \frac{{{\mu _1}{\nu _2}}}{{1 - {\nu _1}{\nu _2}}}{\left\| {{d_1}} \right\|_\infty } &+ \frac{{{\mu _2}}}{{1 - {\nu _1}{\nu _2}}}{\left\| {{d_2}} \right\|_\infty },
\end{aligned}
\end{equation}
see \cite{kim2017small} for detail. The same result can be obtained by considering the epigraph.
\begin{cor}
  Given \eqref{eq:inter_assume} and $\left\| {d_i} \right\|_\infty,i=1,2$, not both zero, there exists an assume-guarantee contract that guarantees \eqref{eq:small_gain_res} if $\nu_1\cdot\nu_2<1$.
\end{cor}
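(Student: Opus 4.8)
The plan is to instantiate the epigraph framework of Lemma~\ref{lem:check_out} on this two-node network and then exhibit a concrete valid contract whose guarantee parameters coincide with the right-hand sides of \eqref{eq:small_gain_res}. First I would read off the local gain functions from \eqref{eq:inter_assume}: since the only neighbor of $\Sigma_1$ is $\Sigma_2$ and vice versa, the assumption $\square|y_2|\le y_2^{\max}$, together with the fixed disturbance bound $\|d_1\|_\infty$, forces the guarantee $\square|y_1|\le\lambda_1(y_2^{\max})$ with
\[
  \lambda_1(y_2^{\max})=\mu_1\|d_1\|_\infty+\nu_1\,y_2^{\max},\qquad
  \lambda_2(y_1^{\max})=\mu_2\|d_2\|_\infty+\nu_2\,y_1^{\max}.
\]
Because $\nu_i\ge 0$, each $\lambda_i$ is affine and monotonically increasing, so the monotonicity hypothesis of Lemma~\ref{lem:check_out} is satisfied automatically.

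Next I would verify the validity condition \eqref{eq:check_out0}, i.e. \eqref{eq:check_out1}, which here is the pair of linear inequalities $\lambda_1(y_2^{\max})\le y_1^{\max}$ and $\lambda_2(y_1^{\max})\le y_2^{\max}$. Writing $y^{\max}=[y_1^{\max},y_2^{\max}]^\intercal$ and collecting the cross-coupling coefficients into $M=\bigl[\begin{smallmatrix}0&\nu_1\\\nu_2&0\end{smallmatrix}\bigr]$, these read $(I-M)\,y^{\max}\ge b$ with $b=[\mu_1\|d_1\|_\infty,\ \mu_2\|d_2\|_\infty]^\intercal\ge 0$. The decisive observation is that the small-gain condition \eqref{eq:small_gain} is exactly the statement that the spectral radius $\rho(M)=\sqrt{\nu_1\nu_2}<1$; hence $I-M$ is invertible and its Neumann series $\sum_{k\ge 0}M^k$ is defined and entrywise nonnegative. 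Therefore $y^{\star}=(I-M)^{-1}b\ge\mathbf{0}$ is a well-defined nonnegative solution of the equality system, and it satisfies the required inequalities (with equality), supplying the feasible $y^{\max}[0]=y^{\star}$ of \eqref{eq:check_out0}.

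Finally, computing $(I-M)^{-1}b$ in closed form gives
\[
  y_1^{\star}=\frac{\mu_1\|d_1\|_\infty+\nu_1\mu_2\|d_2\|_\infty}{1-\nu_1\nu_2},\qquad
  y_2^{\star}=\frac{\mu_2\|d_2\|_\infty+\nu_2\mu_1\|d_1\|_\infty}{1-\nu_1\nu_2},
\]
which are precisely the bounds asserted in \eqref{eq:small_gain_res}. Applying Lemma~\ref{lem:check_out} with $y^{\max}[0]=y^{\star}$ then yields $\square|y|\le\Lambda(y^{\star})$, and since $y^{\star}$ is a fixed point of the affine map $\Lambda$ we have $\Lambda(y^{\star})=y^{\star}$, establishing \eqref{eq:small_gain_res} and hence the existence of the claimed contract. (Equivalently, one may start the value iteration \eqref{eq:refine} from any feasible $y^{\max}[0]$; since $\Lambda$ is affine with unique fixed point $y^{\star}$, the convergence Proposition guarantees the iterates decrease monotonically to $y^{\star}$, recovering the same bounds.)

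The one genuine obstacle is not invertibility of $I-M$ but \emph{nonnegativity} of the solution: because the contract parameters $y^{\max}$ are constrained to be nonnegative, I must argue that $\nu_1\nu_2<1$ yields an entrywise-nonnegative inverse, not merely a real one. This is exactly where the sign structure of $M$ (a nonnegative matrix with spectral radius below one) is essential, and the Neumann-series representation is the cleanest route to it, since it certifies $y^{\star}\ge\mathbf{0}$ without checking components separately and makes transparent the connection to the classical small-gain derivation.
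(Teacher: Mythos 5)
Your proposal is correct, and it reaches the conclusion by a genuinely different verification than the paper's. Both arguments begin identically: you and the paper read off the same affine gain functions $\lambda_1(y_2^{\max})=\mu_1\|d_1\|_\infty+\nu_1 y_2^{\max}$ and $\lambda_2(y_1^{\max})=\mu_2\|d_2\|_\infty+\nu_2 y_1^{\max}$ and reduce the corollary to feasibility of the validity condition of Lemma~\ref{lem:check_out}. The paper then argues geometrically: it draws $\epi(\lambda_1)$ and $\epi(\lambda_2)$ in the $(\|y_1\|_\infty,\|y_2\|_\infty)$ plane (Fig.~\ref{fig:epi_interconnection}), asserts that these two half-plane-like regions intersect nonempty if and only if $\nu_1\nu_2<1$, and identifies the minimal point of the intersection with the bounds in \eqref{eq:small_gain_res} by inspection. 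You instead recast the two inequalities as $(I-M)y^{\max}\ge b$ with $M=\bigl[\begin{smallmatrix}0&\nu_1\\ \nu_2&0\end{smallmatrix}\bigr]$ nonnegative, observe $\rho(M)=\sqrt{\nu_1\nu_2}<1$, and use the Neumann series to certify that $(I-M)^{-1}$ is entrywise nonnegative, so $y^\star=(I-M)^{-1}b\ge\mathbf{0}$ is a feasible (indeed fixed-point) choice whose components are exactly \eqref{eq:small_gain_res}. Your route is more explicit and more rigorous on the one point the paper glosses over --- that the candidate solution is nonnegative and hence admissible as a contract parameter --- and it generalizes immediately to $N>2$ subsystems with linear gains (any nonnegative gain matrix with spectral radius below one). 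What it does not capture is the paper's ``only if'' remark and the geometric picture that motivates the general epigraph method for nonlinear $\lambda_i$, where no Neumann-series argument is available; but since the corollary only claims existence under $\nu_1\nu_2<1$, your argument fully proves the stated result.
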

\begin{proof}
  Given \eqref{eq:inter_assume}, $\left\| {d_{1}} \right\|_\infty$, and $\left\| {d_{2}} \right\|_\infty$, $\lambda_{1,2}$ can be easily found to be
\begin{equation}
\begin{aligned}
  \lambda_1(\left\| {{y_2}} \right\|_\infty)&=&\mu_1 \left\| {{d_1}} \right\|_\infty &+ \nu_1 \left\| {{y_2}} \right\|_\infty \\
  \lambda_2(\left\| {{y_1}} \right\|_\infty)&=&\mu_2 \left\| {{d_2}} \right\|_\infty &+ \nu_2 \left\| {{y_1}} \right\|_\infty.
\end{aligned}
\end{equation}
\begin{figure}[H]
  \centering
  \includegraphics[width=0.48\columnwidth]{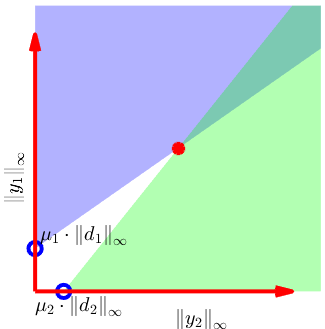}
  \caption{Epigraph of $\lambda_{1,2}$ for the interconnected system}\label{fig:epi_interconnection}
\end{figure}
The epigraph of $\lambda_{1,2}$ are shown in Fig. \ref{fig:epi_interconnection}, where the blue shade shows $\epi(\lambda_1)$ and the green shade shows $\epi(\lambda_2)$. A contract is valid if the point $[\left\| {{y_1}} \right\|_\infty, \left\| {{y_2}} \right\|_\infty]^\intercal$ lies within the intersection of the two epigraphs. When $\left\| {{d_1}} \right\|_\infty$ and $\left\| {{d_2}} \right\|_\infty$ are not both zero, the two epigraphs have a nonempty intersection if and only if $\nu_1\cdot\nu_2<1$. When the intersection is nonempty, the contract with the minimum $\left\|y_{1,2}\right\|_{\infty}$ is depicted as the red dot, which can be verified to be equal to the result in \eqref{eq:small_gain_res}.

\end{proof}
\end{exmp}
\begin{rem}
  The small gain theorem is a special case of the epigraph method, which can be extended to cases when $\lambda_i$ are nonlinear functions and when there are more than 2 interconnected subsystems.
\end{rem}

\subsection{Grid Sampling for epigraph approximation}

Next, we show a grid sampling approach to compute an inner-approximation of $\epi(\lambda_i)$.
For the simplicity of notation, we consider a scalar function $f:\mathbb{R}^n\to\mathbb{R}$, with input $x$ and output $y=f(x)$.

The epigraph of a function is not bounded since it is defined as the area above the function graph in $[x;f(x)]$ space, as shown in Fig. \ref{fig:epigraph}. Besides, the domain of $x$ may be unbounded as well.
\begin{figure}[H]
  \centering
\includegraphics[width=0.9\linewidth]{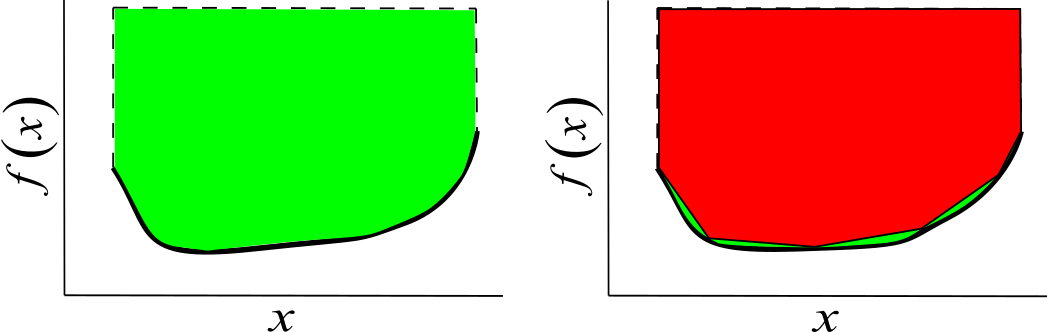}
\caption{Epigraph of a function and its polytopic approximation}\label{fig:epigraph}
\end{figure}
Therefore, to get a reasonable representation of the epigraph, we first need to fix the domain of $x$ to be a compact set $\pazocal{X}$ of interest, then pick a large constant $M$ such that $\forall x\in\pazocal{X}, f(x)<M$. Then we look for a cropped inner approximation of the epigraph.

First notice that by definition, when $f$ is a convex function, then its epigraph is a convex set. If one picks a finite set $S=\left\{x_1,x_2,...,x_n\right\}$ and evaluate the function at every point in $S$, then compute the convex hull of the point set $[x_1;f(x_1)],[x_2;f(x_2)],...,[x_n;f(x_n)]$, denoted as $H$, then $H$ is convex and $H\subseteq \epi(f)$. If for each $x_i$, we add $[x_i;M]$ to the point set, we get a cropped inner approximation of $\epi(f)$, as shown in the second figure in Fig. \ref{fig:epigraph}. Therefore, for a convex function, we can simply sample the input and use the convex hull of the sampled points with their function values as the approximation of $\epi(f)$.

When $f$ is not convex, a decomposition algorithm is developed to inner approximate $\epi(f)$ with a union of polytopes. The decomposition algorithm is omitted.
In this case, suppose $\epi(f)$ is approximated by ${\bigcup\limits_{j = 1}^M {{p_j}} }$, where $p_j$ are polytopes, then $[x;f(x)]\in\epi(f)$ is encoded with the following mixed integer constraint:
\begin{equation}\label{eq:mip}
[x;f\left( x \right)] \in \bigcup\limits_{j = 1}^M {{p_j} \Leftrightarrow } \left( \begin{array}{l}
\mathds{1}([x;f\left( x \right)] \in {p_j}) - {s_j} \ge 0,\\
{s_j} \in \left\{ {0,1} \right\},\sum\limits_{j = 1}^M {{s_j} = 1} ,
\end{array} \right)
\end{equation}
where $s_j$ are the binary variables and $\mathds{1}(\cdot)$ is the indicator function.
\section{Example application to power grid control}\label{sec:application}
In this section, we apply the proposed method on a microgrid control problem as an example to demonstrate the benefit of the method.
\subsection{Microgrid problem setup}
The microgrid control is an important network control application. There has been a lot of effort focusing on the stability, optimality, and safety of the network \cite{molzahn2017survey,zhao2014design,giani2009viking}. This paper is motivated by the need to improve the transient performance of the Optimal Power Flow (OPF) based controller studied in \cite{zhao2014design,mallada2017optimal}. Although the OPF controller achieves good asymptotic performance, it lacks guarantee for the transient performance. In particular, when sudden changes such as failure of a component or a short circuit at one of the nodes happen, drastic change on frequency should be avoided since it may lead to severe damage to the system and heavy economic loss.

Correct-by construction control synthesis is a good complement to the existing controller since it provides performance guarantee to the transient of the system and can work with any existing controller. However, application of correct-by-construction techniques such as robust control invariant sets on the microgrid and other network control problems has been difficult due to the high state dimension of the network systems. The assume-guarantee reasoning method proposed in this paper is a potential solution to this problem of scalability since it decomposes the large network system into small subsystems with bounded disturbances, which can be handled by existing computation tools for correct-by-construction synthesis. Since the grid network is typically sparse, i.e., a node is usually connected to only a few neighbors, the epigraph method proposed in Section \ref{sec:epigraph} has linear complexity, which makes the computation of robust invariant sets for a microgrid possible.

We consider the IEEE 9-bus test case, where the parameters are from the Power System Toolbox (PST) \cite{chow1992toolbox}, as shown in Fig. \ref{fig:9b3g}.
\begin{figure}[H]
  \centering
  \includegraphics[width=1.7in]{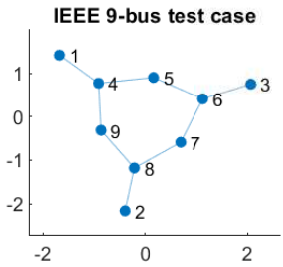}
  \caption{Network structure of the microgrid}\label{fig:9b3g}
\end{figure}
The generator buses are $\pazocal{G}=\left\{1,2,3\right\}$ and the load buses are $\pazocal{L}=\left\{4,5,6,7,8,9\right\}$. The dynamics of the micro-grid can be described by the following model \cite{mallada2017optimal}:
\begin{equation}\label{eq:grid_model}
\resizebox{.9\hsize}{!}{$
\begin{aligned}
  \dot{\theta}_i&=\omega_i,\\
  M_i\dot{\omega}_i&=P_i^{in}-D_i \omega_i - d_i-u_i -\sum\limits_{j\in\pazocal{N}_i}B_{ij}(\theta_i-\theta_j), i\in\pazocal{G}\\
  0&=P_i^{in}-D_i \omega_i - d_i-u_i -\sum\limits_{j\in\pazocal{N}_i}{B_{ij}(\theta_i-\theta_j)}, i\in\pazocal{L},
\end{aligned}
$}
\end{equation}
where $\theta_i$ and $\omega_i$ are the phase angle and frequency of the voltage at bus $i$, $P_i^{in}$ and $d_i$ are the input power and uncontrollable load at bus $i$, the sudden change of them is the main source of disturbance to the system. $u_i$ is the controllable load, which is used to regulate bus $i$. $\pazocal{G}$ and $\pazocal{L}$ represent the set of generator buses and the set of pure load buses. For a generator bus, $M_i$ is the inertia and $D_i$ is the ``damping coefficient''; for a load bus, there is zero inertia and $\omega_i$ is determined by an algebraic equation. A generator bus is modeled with 2 states ($x_i=[\theta_i,\omega_i]^\intercal$); and a load bus is modeled with 1 state ($x_i=\theta_i$) for a load bus. $B_{ij}$ represents the sensitivity of the power flow to phase variations, it is nonzero when bus $i$ and bus $j$ are neighbors. The output $y_i=\theta_i$ since the coupling between buses happen through $\theta_i$.

The control objective is to prevent large frequency deviation from a set value. However, since the coupling happens via the phase angle differences, in order to bound the frequency deviation, one need to bound phase angles as well. The approach we take is to compute a robust control invariant set (RCI) for each bus, which is robust against sudden changes in the input power and uncontrollable load and the coupling between neighboring buses. In addition, the frequency deviation bound is always satisfied inside the RCI.
\subsection{Search for RCI with epigraph algorithm}
For each bus, the RCI computation depends on the available input, bound on possible exogenous disturbance and bound on the phase angles of neighboring buses. Denote the invariant set, the input bound and exogenous disturbance bound of bus $i$ as $\pazocal{S}_i$, $\pazocal{U}_i$ and $\pazocal{D}_i$, respectively. $\pazocal{U}_i$ and $\pazocal{D}_i$ are determined by the environment assumption and are assumed to be given, while the bound on phase angle deviation of neighboring buses $\theta_{\pazocal{N}_i}^{\max}$ is given as the feedback assumption.

It should be emphasized that the epigraph method works with any method that can compute a robust invariant set given the disturbance bound. Therefore, the specific algorithm of RCI computation is not the focus of this paper. In particular, we used a robust optimization approach to compute the robust invariant set, which uses a polytope with fixed template as the representation of the RCI and iteratively solve for an RCI through robust optimization. See \cite{chen2018RCI} for detail.

Denote the RCI computation process as $\mathcal{F}$, which takes $\pazocal{U}_i$, $\pazocal{D}_i$, the dynamics $\Sigma_i$ and $\theta_{\pazocal{N}_i}^{\max}$ as input and generates $\pazocal{S}_i$:
\begin{equation}
  \pazocal{S}_i=\mathcal{F}(\pazocal{U}_i,\pazocal{D}_i,\Sigma_i,\theta_{\pazocal{N}_i}^{\max}).
\end{equation}
\begin{defn}
  $\mathcal{F}$ is \textit{monotonic} w.r.t. $\theta^{\max}$ if for any fixed $\pazocal{U}$, $\pazocal{D}$ and $\Sigma$, given $\theta^{\max,1}\ge \theta^{\max,2}\ge \mathbf{0}$, let $\pazocal{S}^i = \mathcal{F}\left( \pazocal{U},\pazocal{D},\Sigma,\theta^{\max,i} \right)$, then $\pazocal{S}^2\subseteq \pazocal{S}^1$. The inequality is defined element-wise.
\end{defn}
\begin{prop}
There exists a $\mathcal{F}$ that is monotonic w.r.t. $\theta^{\max}$.
\end{prop}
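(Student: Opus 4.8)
The plan is to \emph{exhibit} a concrete construction of $\mathcal{F}$ and verify monotonicity directly, rather than to characterize all monotonic choices. The first thing I would record is why the obvious candidate fails: the \emph{maximal} RCI for each disturbance level is monotonic in the \emph{wrong} direction. Since a smaller neighbor bound $\theta^{\max,2}\le\theta^{\max,1}$ yields a smaller coupling-disturbance set, every set that is robustly invariant against the larger disturbance is automatically invariant against the smaller one, so the maximal RCIs satisfy $\pazocal{S}^1\subseteq\pazocal{S}^2$, the reverse of what the definition demands. This tells me the correct object is a \emph{minimal}-type RCI, one that grows as the disturbance grows.

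Concretely, I would fix a single stabilizing feedback policy $\kappa_i$ for $\Sigma_i$ (its existence is the standing stabilizability assumption behind computing any RCI at all) and build the forward-reachable tube under $\kappa_i$. Starting from the equilibrium $\Omega_0=\{0\}$, define the monotone iteration
\begin{equation}
  \Omega_{k+1}=\Omega_k\cup \mathrm{Reach}_{\kappa_i}(\Omega_k,\theta^{\max}),
\end{equation}
where $\mathrm{Reach}_{\kappa_i}(\Omega,\theta^{\max})=\{f_i(x,y,\kappa_i(x),d): x\in\Omega,\ |y|\le\theta^{\max},\ d\in\pazocal{D}_i\}$ is the one-step robust successor under $\kappa_i$, and set $\mathcal{F}(\pazocal{U}_i,\pazocal{D}_i,\Sigma_i,\theta^{\max}):=\overline{\bigcup_{k\ge0}\Omega_k}$. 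The $\Omega_k$ are increasing in $k$, so the limit is well defined.

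Two facts then finish the argument. First, the limit set is a valid RCI: for any $x\in\bigcup_k\Omega_k$ we have $x\in\Omega_k$ for some $k$, and by construction every one-step successor under $\kappa_i$ and any admissible disturbance lies in $\Omega_{k+1}\subseteq\mathcal{F}(\cdots)$, so the set is robustly forward invariant under $\kappa_i$ and hence meets the existential-control requirement in \eqref{eq:RCI}; points on the closure are handled by continuity of $f_i$ and $\kappa_i$ together with closedness of the limit. Second, monotonicity in $\theta^{\max}$ is immediate from the structure of the iteration: $\mathrm{Reach}_{\kappa_i}$ is monotone both in its set argument and in $\theta^{\max}$ (a larger $\theta^{\max}$ only enlarges the admissible range of $y$), so a routine induction gives $\Omega_k(\theta^{\max,2})\subseteq\Omega_k(\theta^{\max,1})$ whenever $\theta^{\max,2}\le\theta^{\max,1}$; unions and closures preserve this inclusion, yielding $\pazocal{S}^2\subseteq\pazocal{S}^1$ as required.

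The step I expect to be the main obstacle is establishing that the tube converges to a \emph{bounded} (indeed compact) set rather than growing without bound. This is exactly where the stabilizing policy and the bounded disturbance must interact correctly: I would invoke input-to-state stability of the closed loop $x_i^+=f_i(x_i,y_{\pazocal{N}_i},\kappa_i(x_i),d_i)$ with respect to the bounded inputs $(y_{\pazocal{N}_i},d_i)$, which furnishes a uniform bound on the tube and hence a compact fixed point; once boundedness is in hand, the invariance-at-the-closure and inductive-monotonicity verifications are routine. I would also remark that the same monotonicity holds for the fixed-template robust-optimization realization of $\mathcal{F}$ actually used in the paper, where one solves for the smallest scaling of a fixed polytope that is robustly invariant: the minimal feasible scaling is nondecreasing in $\theta^{\max}$ by the identical comparison argument.
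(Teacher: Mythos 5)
Your proof is correct, but it takes a genuinely different and considerably more elaborate route than the paper's. The paper's entire argument is the observation you make in your opening paragraph --- a set that is robustly invariant against the larger coupling bound $\theta^{\max,1}$ remains robustly invariant against the smaller bound $\theta^{\max,2}$ because the uncertainty set only shrinks --- and it concludes immediately by letting the algorithm return that same set for the smaller bound, i.e.\ picking $\pazocal{S}^2:=\pazocal{S}^1$, so that the required nesting holds trivially. You instead discard this shortcut (after correctly noting that the \emph{maximal} RCI is anti-monotone) and exhibit the minimal robust positively invariant set: the closure of the forward-reachable tube from the origin under a fixed stabilizing feedback, whose monotonicity in $\theta^{\max}$ follows by induction from monotonicity of the one-step reach operator in both arguments. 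The trade-off is clear. The paper's construction needs no extra hypotheses and is two lines long, but it only pins down $\mathcal{F}$ relative to a reference computation; pushed to a genuine function defined for all $\theta^{\max}$ it degenerates toward a constant map, which would make the induced $\lambda_i$ constant and the epigraph method vacuously conservative. Your construction yields a nontrivial, genuinely increasing $\mathcal{F}$ and hence a meaningful $\lambda_i$, at the price of an added ISS/stabilizability assumption (needed for boundedness of the tube, which you rightly flag as the delicate step) and of committing to a single feedback $\kappa_i$ rather than exploiting the existential quantifier over $u_i$ in \eqref{eq:RCI}. For the proposition as stated --- bare existence of a monotonic $\mathcal{F}$ --- both arguments suffice.
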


\begin{proof}
$\theta^{\max,1}\ge \theta^{\max,2}$ implies that the uncertainty set for $\pazocal{S}^1$ contains the uncertainty set for $\pazocal{S}^2$, so $\pazocal{S}^1$ is also robust control invariant under $\theta^{\max,2}$. Therefore, picking $\pazocal{S}^2 = \pazocal{S}^1$ completes the proof.
\end{proof}
\begin{ass}\label{ass:F_monotone}
  The algorithm $\mathcal{F}$ for computing the robust control invariant set is monotonic.
\end{ass}
Let
\begin{equation}\label{eq:lambda_grid}
  \lambda_i(\theta_{\pazocal{N}_i}^{\max})=\max\limits_{x_i\in\pazocal{S}_i}{\left|\theta_i\right|}.
\end{equation}
By Assumption \ref{ass:F_monotone}, $\lambda_i$ is clearly monotonic. The evaluation of $\lambda_i$ is done in two steps. First, with $\theta_{\pazocal{N}_i}^{\max}$ fixed, $\mathcal{F}$ is called to compute an RCI $\pazocal{S}_i$, then $\theta_i^{\max}$ is obtained through \eqref{eq:lambda_grid}.

Then the inner approximation of $\epi(\lambda_i)$ is computed for each bus with the grid sampling algorithm, Fig. \ref{fig:epi_grid_example} shows two computed epigraph as examples:
\begin{figure}[H]
  \centering
  \includegraphics[width=0.85\columnwidth]{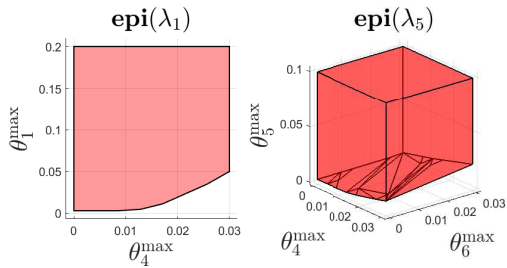}
  \caption{Inner approximations of $\epi(\lambda_1)$ and $\epi(\lambda_5)$}\label{fig:epi_grid_example}
\end{figure}
Since some of the epigraphs are not convex, a mixed integer programming as formulated in \eqref{eq:mip} is solved. Once a valid assume-guarantee constraint is obtained, robust invariant sets for each subsystem can be obtained via $\mathcal{F}$.

 Fig. \ref{fig:RCI} shows the robust invariant sets for the generator buses under the assume-guarantee contract.

\begin{figure}[H]
  \centering
  \includegraphics[width=0.9\columnwidth]{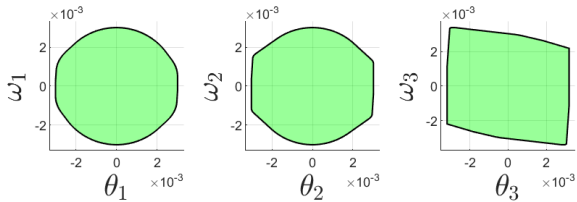}
  \caption{Robust control invariant sets for the generator buses}\label{fig:RCI}
\end{figure}

\subsection{Simulation result}

For each bus, the computed robust control invariant set is then used to construct a control barrier function (CBF), which acts as a supervisor. The CBF supervisory control was first proposed in \cite{ames2014control}, where the authors proposed a Quadratic Programming framework that keeps the system safe with minimum intervention. The robust optimization algorithm generates an RCI with a polytopic representation: $\left\{ {x\in\mathbb{R}^n|Px \le q} \right\}$, where $P$ is a constant $L\times n$ matrix and $q\in\mathbb{R}^L_{>0}$. Note that the origin is always contained in the interior of the RCI. The CBF is defined as
\begin{equation}\label{eq:CBF}
  b\left( x \right) = \mathop {\min }\limits_k \frac{{{q_k} - {P_k}x}}{{{q_k}}}
\end{equation}
The supervisory control is implemented with the following quadratic programming:
\begin{equation}\label{eq:qp}
  \begin{aligned}
u*=\mathop {\arg\min }\limits_u \;&{\left\| {u - {u_0}} \right\|^2}\\
&s.t.\quad\dot b(x,u) + \kappa b(x) \ge 0,
\end{aligned}
\end{equation}
where $u_0$ is the control input of a student controller and $\kappa$ is a positive constant. In this case the primal-dual controller introduced in \cite{mallada2017optimal} is used as the student controller. The second line of \eqref{eq:qp} is called the CBF condition. It can be shown that when the CBF condition is satisfied, $x$ stays inside the RCI, see \cite{ames2017control} for detail. The quadratic programming in \eqref{eq:qp} will leave $u_0$ unchanged if $u_0$ satisfies the CBF condition and use minimum intervention when it doesn't. \eqref{eq:qp} is always feasible for a $\kappa$ large enough if $\left\{ {x\in\mathbb{R}^n|Px \le q} \right\}$ is an RCI.
\begin{figure}[H]
  \centering
  \includegraphics[width=1.0\columnwidth]{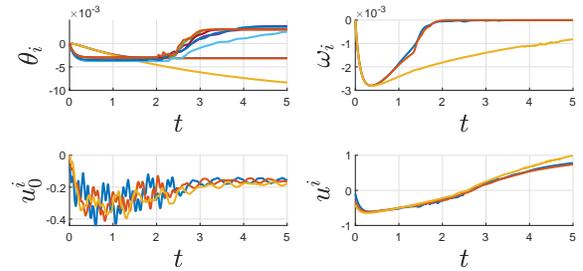}
  \caption{Simulation with CBF as supervisor}\label{fig:sim_CBF1}
\end{figure}
Fig. \ref{fig:sim_CBF1} shows the result of simulation when CBF is acting as a supervisor. The bound on frequency deviation is set at $5\times10^{-3}rad/s$ and was never breached.
\begin{figure}[H]
  \centering
  \includegraphics[width=1.0\columnwidth]{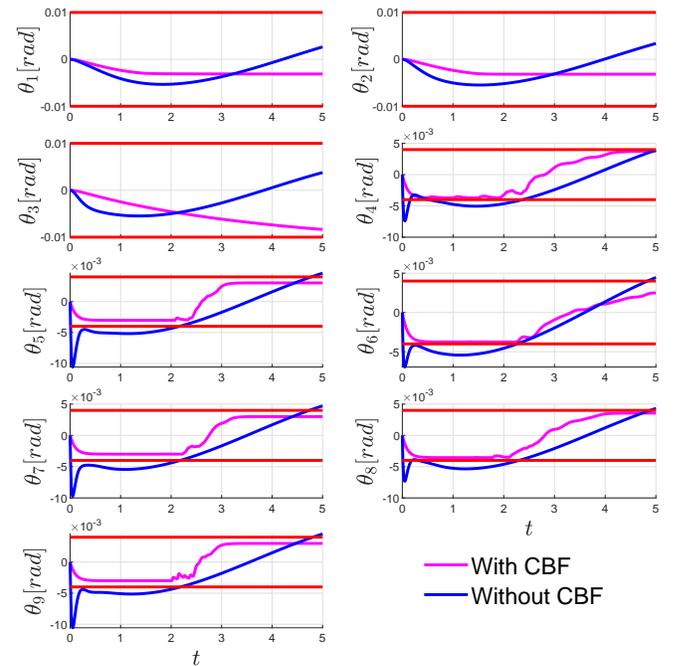}
  \caption{phase angle plot with and without CBF as supervisor}\label{fig:sim_CBF2}
\end{figure}

Fig. \ref{fig:sim_CBF2} shows the values of $\theta_i$ with and without the CBF supervisor. Under the CBF supervisory controller, all the $\theta_i$s are within their respective bound determined by the contract; on the other hand, without CBF, there is no guarantee that the phase angles stay within bounds under $u_0$.

\section{Conclusion}\label{sec:conclusion}
We propose an assume-guarantee reasoning based method to compute a robust control invariant set for a network system. The coupling between subsystems are treated as bounded disturbances and is handled with an assume-guarantee contract. We show that an assume-guarantee contract satisfying the validity condition guarantees robust set invariance for a network system and can be further refined with value iteration. When such a valid contract is not known, an epigraph algorithm is proposed to search for a valid contract, which enjoys linear complexity when the network is sparse. It is shown that the epigraph algorithm can be viewed as an extension of the classic small gain theorem to network systems with nonlinear `gains'. The proposed method is demonstrated with a microgrid control example. The epigraph algorithm is able to find a valid contract which leads to robust invariant sets for each subsystem in the network. Then control barrier functions are constructed based on the robust invariant sets which then act as supervisors to keep the states inside their respective invariant sets under exogenous disturbances and coupling between the subsystems.
\balance
\bibliographystyle{myieeetran}
\bibliography{Grid_bib}
\end{document}